\newcommand\rank{\ensuremath{\mathtt{rank}}}
\newcommand\select{\ensuremath{\mathtt{select}}}
\newcommand\rankz{\ensuremath{\mathtt{rank}_0}}
\newcommand\ranko{\ensuremath{\mathtt{rank}_1}}
\newcommand\selectz[1]{\ensuremath{\mathtt{select}^{#1}_0}}
\newcommand\selecto[1]{\ensuremath{\mathtt{select}^{#1}_1}}
\newcommand\poly[1]{\mathrm{poly}(#1)}
\newcommand\polylog[1]{\mathrm{polylog}(#1)}
\newcommand\eps{\varepsilon}
\newcommand\FID{\textsc{fid}}
\newcommand\PRED{\textsc{pred}}
\newcommand\VEB{\textsc{veb}}
\begin{document}
\title[Lowering the Redundancy in Fully Indexable Dictionaries]{More Haste, Less Waste: Lowering the Redundancy \\ in Fully Indexable Dictionaries}
\author[unipi]{R. Grossi}{Roberto Grossi}
\address[unipi]{Dipartimento di Informatica, Universit\`a di Pisa, Italy}
\email{{grossi,aorlandi}@di.unipi.it}

\author[unipi]{A. Orlandi}{Alessio Orlandi}
\author[rram]{R. Raman}{Rajeev Raman}
\address[rram]{Department of Computer Science, University of Leicester,
 United Kingdom}
\email{r.raman@mcs.le.ac.uk}
\author[ssrao]{S. S. Rao}{S. Srinivasa Rao}
\address[ssrao]{\textsc{madalgo} Center$^{*}$\footnote{$^{*}$Center for Massive Data Algorithmics, a center of the Danish National Research Foundation}, Aarhus University, Denmark}
\email{ssrao@daimi.au.dk}


\def\today{}

\begin{abstract}
  We consider the problem of representing, in a compressed format, a
  bit-vector~$S$ of $m$ bits with $n$ $\mathbf{1}$s, supporting the
  following operations, where $b \in \{ \mathbf{0}, \mathbf{1} \}$:
\begin{itemize}
\item $\mathtt{rank}_b(S,i)$ returns the number of occurrences of bit
  $b$ in the prefix $S\left[1..i\right]$;
\item $\mathtt{select}_b(S,i)$ returns the position of the $i$th
  occurrence of bit $b$ in $S$.
\end{itemize}
Such a data structure is called \emph{fully indexable dictionary
  (\FID)} [Raman, Raman, and Rao, 2007], and is at least as powerful
as predecessor data structures.  Viewing $S$ as a set $X = \{ x_1,
x_2, \ldots, x_n \}$ of $n$ distinct integers drawn from a universe
$[m] = \{1, \ldots, m\}$, the predecessor of integer $y \in [m]$ in
$X$ is given by $\selecto{}(S, \ranko(S,y-1))$. {\FID}s have many
applications in succinct and compressed data structures, as they are
often involved in the construction of succinct representation for a
variety of abstract data types.

\smallskip

Our focus is on space-efficient {\FID}s on the \textsc{ram} model with word size
$\Theta(\lg m)$ and constant time for all operations, so that
the time cost is independent of the input size.

Given the bitstring $S$ to be encoded, having length $m$ and
containing $n$ ones, the minimal amount of information that needs to
be stored is $B(n,m) = \lceil \log {{m}\choose{n}} \rceil$.  
The state of the art in building a \FID\ for~$S$ is given in~\mbox{}[P\v{a}tra\c{s}cu, 2008] using $B(m,n)+O( m / ( (\log m/ t) ^t) ) + O(m^{3/4}) $ bits, to support the operations in $O(t)$ time.

Here, we propose a parametric data structure exhibiting a time/space
trade-off such that, for any real constants $0 < \delta \leq 1/2$, $0
< \eps \leq 1$, and integer $s > 0$, it uses
\[
B(n,m) + O\left(n^{1+\delta} + n \left(\frac{m}{n^s}\right)^\eps\right)
\]
bits and performs all the operations in time $O(s\delta^{-1} +
\eps^{-1})$.  The improvement is twofold: our redundancy can be
lowered parametrically and, fixing $s = O(1)$, we get a constant-time
\FID\ whose space is $B(n,m) + O(m^\eps/\poly{n})$ bits, for
sufficiently large $m$.  This is a significant improvement compared to
the previous  bounds for the general case.

\end{abstract}

\maketitle
\vspace*{-5ex}
\section{Introduction}
\label{sec:introduction}

Data structures for dictionaries \cite{BM99,munro96,pagh01,RRR07},
text indexing \cite{CM96,FM05,GV05,HSS03,MRR01,NM07}, and representing
semi-structured data \cite{FLMM05,GRRR06,GRR06,MR01,RRR07}, often
require the very space-efficient representation of a bivector $S$ of
$m$ bits with $n$ $\mathbf{1}$s (and $m-n$ $\mathbf{0}$s).  Since
there are ${m \choose n}$ possible choices of $n$ $\mathbf{1}$s out of
the $m$ bits in $S$, a simple information-theoretic argument shows
that we need at least $B(n,m) = \lceil \log {m \choose n} \rceil$ bits
of space, in the worst case, to store $S$ in some compressed format.
However, for the aforementioned applications, it is not enough just to
store the compressed $S$, as one would like to support the following
operations on $S$, for $b \in \{ \mathbf{0}, \mathbf{1} \}$:
\begin{itemize}
\item $\mathtt{rank}_b(S,i)$ returns the number of occurrences of bit
  $b$ in the prefix $S\left[1..i\right]$;
\item $\mathtt{select}_b(S,i)$ returns the position of the $i$th
  occurrence of bit $b$ in $S$.
\end{itemize}

Our focus will be on space-efficient data structures that 
support these operations efficiently, on the \textsc{ram} model with word size 
$\Theta(\log m)$.  The resulting data structure is called a
\emph{fully indexable dictionary} (\FID) \cite{RRR07} and is quite
powerful. For example, $S$ can equally represent a set $X = \{ x_1,
x_2, \ldots, x_n \}$ of $n$ distinct integers drawn from a universe
$[m] = \{1, \ldots, m \}$, where $S\left[x_i\right] = \mathbf{1}$, for
$1 \leq i \leq n$, while the remaining $m-n$ bits of $S$ are
$\mathbf{0}$s. In this context, the classical problem of finding the
\emph{predecessor} in $X$ of a given integer $y \in [m]$ (i.e.\mbox{}
the greatest lower bound of $y$ in $X$) can be solved with two {\FID}
queries on $S$ by $\selecto{}(S, \ranko(S,y-1))$.  {\FID}s have also
connections with coding theory, since they represent a sort of locally
decodable source code for $S$ \cite{Buhrman:2002:BO}. They are at the
heart of compressed text indexing since they enable space to be
squeezed down to the high-order entropy when properly employed
\cite{GGV03}.  Finally, they are the building blocks for many complex
low space data structures \cite{BB08,FGGSV08,munro08,MRRR03} that
require $O(1)$ lookup time, namely, their time complexity is
independent of the number of entries stored at the expense of using
some extra space.

To support the $\mathtt{rank}$ and $\mathtt{select}$ operations in
$O(t)$ time, for some parameter~$t$, it appears to be necessary to use
additional space, beyond the bound $B(n,m)$ needed for representing
the bitstring $S$ in compressed format.  This extra space
is termed the \emph{redundancy} $R(n,m,t)$ of the data structure, and
gives a total of $B(n,m) + R(n,m,t)$ bits \cite{GM07}.  Although the
leading term $B(n,m)$ is optimal from the information-theoretic point
of view, a discrepancy between theory and practice emerges when
implementing {\FID}s for various applications
\cite{CN08,GGMN05,GGV04,GHSV07,OS07,Vigna08}.  In particular, the term
$B(n,m)$ is often of the same order as, if not superseded by, the
redundancy term $R(n,m,t)$. For example, consider a constant-time
\FID\ storing $n=o(m/\polylog{m})$ integers from the universe $[m]$:
here, $B(n,m)$ is negligible when compared to the best known bound
of $R(n,m,1) = O(m / \polylog{m})$ \cite{P08}.

Our goal is that of reducing the redundancy $R(n,m,t)$ for the general
case $n \leq m$.  Although most of the previous work has generally
focussed on the case $t = O(1)$, and $m = n \cdot \polylog{n}$, the
burgeoning range of applications (and their complexity) warrant a much
more thorough study of the function $R(n,m,t)$.  

There are some inherent limitations on how small can the redundancy
$R(n,m,t)$ be, since {\FID}s are connected to data structures for the
predecessor problem, and we can inherit the predecessor lower bounds
regarding several time/space tradeoffs.  The connection between
{\FID}s and the predecessor problem is well
known~\cite{BF02,GHSV07,PT06,RRR07} and is further developed in this
paper, going beyond the simple inheritance of lower bounds.  A
predecessor data structure which gives access to the underlying data
set is, informally, a way to support \emph{half} the operations
natively: either $\selecto{}$ and $\ranko$, or $\selectz{}$ and
$\rankz{}$.  In fact, we show that a data structure solving the
predecessor problem can be turned into a \FID\ and can also be made to
store the data set using $B(n,m)+O(n)$ bits, under certain assumptions
over the data structure.

Consequently, if we wish to understand the limitations in reducing the
redundancy $R(n,m,t)$ of the space bounds for {\FID}s, we must briefly
survey the state of the art for the lower bounds involving the
predecessor problem.  The work in \cite{PT06} shows a number of lower
bounds and matching upper bounds for the predecessor problem, using
data structures occupying at least $\Omega(n)$ words, from which we
obtain, for example, that $R(n,m,1)$ can be $o(n)$ only when $n =
\polylog{m}$ (a degenerate case) or $m = n \,\polylog{n}$.  For $m =
n^{O(1)}$, the lower bound for $B(n,m) + R(n,m,1)$ is
$\Omega(n^{1+\delta})$ for any fixed constant $\delta>0$.  Note that
in the latter case, $B(n,m) = O(n \log m) = o(R(n,m,1))$, so the
``redundancy'' is larger than $B(n,m)$.  Since $\ranko$ is at least
as hard as the predecessor problem, as noted in \cite{BF02,PT06}, 
then all \FID{}s suffer from the same limitations. (It
is obvious that $\rankz$ and $\ranko$ have the same complexity, as
$\rankz(S,i) + \ranko(S,i) = i$.) As noted in \cite[Lemma 7.3]{RRR07},
$\selectz{}$ is also at least as hard as the predecessor problem.
Other lower bounds on the redundancy were given for ``systematic''
encodings of~$S$ (see \cite{GM07,golynski07,M05} and related papers),
but they are not relevant here since our focus is on
``non-systematic'' encodings \cite{GGGRR07,GRR08}, which have provably
lower redundancy. (In ``non-systematic'' encodings one can store $S$
in compressed format.)

In terms of upper bounds for $R(n,m,t)$, a number are known,
of which we only enumerate the most relevant here. 
For systematic structures,
an optimal upper bound is given by~\cite{golynski07}
for $R(n,m,O(1)) = O(m \log \log m / \log m)$.
Otherwise, a very recent upper
bound in \cite{P08} gives $R(n,m,t) = O(m / ((\log m)/t)^t + m^{3/4}
\polylog{m})$ for any constant $t > 0$. 
These bounds are most interesting when
$m = n \cdot \polylog{n}$.  As noted earlier, sets that are sparser
are worthy of closer study. For such sets, one cannot have best of two
worlds: one would either have to look to support queries in
non-constant time but smaller space, or give up on attaining $R(n,m,
1) = o(B(n,m))$ for constant-time operations.

The main role of generic case \FID s is expressed when they take part in more
structured data structures (e.g. succinct trees)
where there is no prior knowledge of the relationship between $n$ and $m$.
Our main contribution goes along this path, striving for constant-time operations.
Namely, we devise a
constant-time \FID\ having redundancy $R(n,m,O(1)) = O(n^{1+\delta} +
n(m/n^s)^\eps)$, for any fixed constants $\delta < 1/2$, $\eps < 1$
and $s>0$ (Theorem~\ref{the:multirank_recursion}).  The running time
of the operations is always $O(1)$ for $\selecto{}$ (which is
insensitive to time-space tradeoffs) and is $O(\eps^{-1} +
s\delta^{-1}) = O(1)$ for the remaining operations.  When $m$ is
sufficiently large, our constant-time \FID\ uses just $B(n,m) +
O(m^\eps/\poly{n})$ bits, which is a significant improvement compared
to the previous bounds for the general case, as we move 
from a redundancy of kind $O(m/\polylog
{m})$ to a one of kind $O(m^\eps)$, by proving for the first time that
polynomial reduction in space is possible.

Moreover, when instantiated in a polynomial universe case (when $m =
\Theta(n^{O(1)})$, for a sufficiently small $\eps$, the redundancy is dominated
by $n^{1+\delta}$, thus extending the known predecessor search data structure
with all four \FID\   operations
without using a second copy of the data.
Otherwise, the $m^\eps$ term is dominant when the universe is superpolynomial,
e.g. when $m = \Theta(2^{\log^c n})$ for $c > 1$.
In such cases we may not match the lower bounds for predecessor search;
however, this is the price for a solution which is agnostic of $m,n$ relationship.


We base our findings on the Elias-Fano encoding scheme
\cite{Elias74,Fano71}, which gives the basis for {\FID}s naturally
supporting $\selecto{}$ in $O(1)$ time.  


\section{Elias-Fano Revisited}
\label{sec:elias-fano-revisited}

We review how the Elias-Fano scheme \cite{Elias74,Fano71,
  OS07,Vigna08} works for an arbitrary set $X = \{ x_1 < \cdots < x_n
\}$ of $n$ integers chosen from a universe $[m]$.  Recall that $X$ is
equivalent to its characteristic function mapped to a bitstring $S$ of
length $m$, so that $S\left[x_i\right] = \bm{1}$ for $1 \leq i \leq n$
while the remaining $m-n$ bits of $S$ are $\bm{0}$s.  Based on the
Elias-Fano encoding, we will describe the main ideas behind our new
implementation of fully indexable dictionaries (\FID s).  We also
assume that $n \leq m/2$---otherwise we build a \FID\ on the
complement set of $X$ (and still provide the same functionalities),
which improves space consumption although it does not guarantee
$\selecto{}$ in $O(1)$ time.

\paragraph*{\bf Elias-Fano encoding.}  
\label{sub:elias-fano-encoding}
Let us arrange the integers of $X$ as a \emph{sorted} sequence of
consecutive words of $\log m$ bits each. Consider the first\footnote{Here we use Elias' original choice of ceiling and floors, thus our bounds slightly differ from the \emph{sdarray} structure of~\cite{OS07}, where they obtain $n\lceil\log(m/n)\rceil + 2n$.} $\lceil \log n\rceil$
bits of each integer $x_i$, called $h_i$, where $1 \leq i \leq n$.
We say that any
two integers $x_i$ and $x_j$ belong to the same \emph{superblock} if
$h_i = h_j$.

The sequence $h_1 \leq h_2 \leq \cdots \leq h_n$ can be stored as a
bitvector $H$ in $3n$ bits, instead of using the standard $n \lceil
\log n\rceil$ bits. It is the classical unary representation, in which
an integer $x \geq 0$ is represented with $x$ $\bm{0}$s followed by a
$\bm{1}$. Namely, the values $h_1, h_2-h_1, \ldots, h_n-h_{n-1}$ are
stored in unary as a multiset. 
For example, the sequence $h_1, h_2, h_3, h_4, h_5 =
1,1,2,3,3$ is stored as $H = \bm{01101011}$, where the $i$th $\bm{1}$
in $H$ corresponds to $h_i$, and the number of $\bm{0}$s from the
beginning of $H$ up to the $i$th $\bm{1}$ gives $h_i$ itself.  The
remaining portion of the original sequence, that is, the last $\log m
- \lceil \log n\rceil$ bits in $x_i$ that are not in $h_i$, are stored
as the $i$th entry of a simple array $L$. Hence, we can reconstruct
$x_i$ as the concatenation of $h_i$ and $L\left[i\right]$, for $1 \leq
i \leq n$.  The total space used by $H$ is at most $2^{\lceil \log n
  \rceil} + n \le 3n$ bits and that used by $L$ is $n \times (\log m -
\lceil \log n\rceil) \leq n \log (m/n)$ bits.

Interestingly, the plain storage of the bits in $L$ is related to the
information-theoretic minimum, namely, $n \log (m/n) \leq B(n,m)$
bits, since for $n \leq m/2$,  $B(n,m) \sim n\log(m/n) + 1.44\,n$ by means of Stirling
approximation.  
In other words, the simple way of representing the
integers in $X$ using Elias-Fano encoding requires at most $n \log
(m/n) + 3n$ bits, which is nearly $1.56\,n$ away from the theoretical
lower bound $B(n,m)$. 
If we employ a constant-time \FID\ to store $H$,
Elias-Fano encoding uses a total of $B(n,m) + 1.56\,n + o(n)$ bits.

\paragraph*{\bf Rank and select operations vs predecessor search.}
\label{sub:rank-select-vs-predecessor}
Using the available machinery---the \FID\ on $H$ and the plain array
$L$---we can perform $\select_1(i)$ on $X$ in $O(1)$ time: we first recover
$h_i = \select_1(H,i)-i$ and then concatenate it to the fixed-length
$L\left[i\right]$ to obtain $x_i$ in $O(1)$ time \cite{GV05}.
As for $\rank$ and $\select_0$, we point out that they are 
intimately related to the \emph{predecessor search},
as we show below (the converse has already been pointed out in the
Introduction).


Answering $\ranko(k)$ in $X$ is equivalent to finding the predecessor
$x_i$ of $k$ in $X$, since $\ranko(k) = i$ when $x_i$ is the
predecessor of $k$. Note that $\rankz(k) = k - \ranko(k)$, so
performing this operation also amounts to finding the predecessor. As
for $\selectz{}(i)$ in $X$, let $\overline{X} = [m] \setminus X = \{
v_1, v_2, \dots, v_{m-n} \}$ be the
complement of $X$, where $v_i < v_{i+1}$, for $1 \le i < m-n$. 
Given any $1 \leq i \leq m-n$, our goal is to find $\selectz{}(i) =
v_i$ in constant time, thus motivating that our assumption $n \leq
m/2$ is w.l.o.g.: whenever $n \leq m/2$, we store the complement set
of $X$ and swap the zero- and one-related operations.

The key observation comes from the fact that we can associate each
$x_l$ with a new value $y_l = \bigl|\{ v_j \in \overline{X} \mathrm{\
 such\ that\ } v_j < x_l\}\bigr|$, which is the number of elements in
$\overline{X}$ that precede $x_l$, where $1 \leq l \leq n$.  The
relation among the two quantities is simple, namely, $y_l = x_l - l$,
as we know that exactly $l-1$ elements of $X$ precede $x_l$ and so
the remaining elements that precede $x_l$ must originate from
$\overline{X}$. Since we will often refer to it, we call the set
$Y = \{ y_1, y_2, \ldots, y_n \}$ the \emph{dual
 representation} of the set $X$.

Returning to the main problem of answering $\selectz{}(i)$ in $X$,
our first step is to find the predecessor $y_j$ of $i$ in $Y$, namely,
the largest index $j$ such that $y_j < i$. As a result, we infer that
$x_j$ is the predecessor of the \emph{unknown} $v_i$ (which will be
our answer) in the set $X$. We now have all
the ingredients to deduce the value of $v_i$. Specifically, the
$y_j$th element of $\overline{X}$ occurs before $x_j$ in the universe,
and there is a nonempty run of elements of $X$ up to and including
position $x_j$, followed by $i-y_j$ elements of $\overline{X}$
up to and including (the unknown) $v_i$. Hence, $v_i = x_j + i
- y_j$ and, since $y_j = x_j - j$, we return $v_i = x_j + i - x_j + j=
i + j$.  (An alternative way to see $v_i = i+j$ is that $x_1, x_2,
\ldots, x_j$ are the only elements of $X$ to the left of the unknown
$v_i$.)  We have thus proved the following.

\begin{lemma}
 \label{lemma:rank-select-predecessor}
 Using the Elias-Fano encoding, the $\selecto{}$ operation takes
 constant time, while the $\rank$ and $\selectz{}$ operations can be
 reduced in constant time to predecessor search in the sets $X$
 and $Y$, respectively.
\end{lemma}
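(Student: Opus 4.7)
The plan is to dispatch the three claims separately, exploiting the fact that $H$ is itself a sparse bitvector of length at most $3n$ that supports $\selecto{}$ in $O(1)$ time (via a small sub-\FID), and that every $x_i$ is the concatenation of the high bits $h_i$ with the word $L[i]$. For $\selecto{}$ on $X$: since $H$ is the unary encoding of $h_1 \le \cdots \le h_n$, the value $h_i$ equals the number of $\bm{0}$s preceding the $i$th $\bm{1}$ in $H$, giving $h_i = \selecto{}(H,i) - i$; one \FID{} query on $H$ and one lookup into $L$ thus reconstruct $x_i$ in constant time.

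For the rank operations on $X$: by definition $\ranko(S,k)$ returns the largest index $i$ with $x_i \le k$, which is exactly a predecessor query in $X$, and $\rankz(S,k) = k - \ranko(S,k)$ disposes of the zero case at no extra cost. The only delicate case is $\selectz{}$, which must return the $i$th element $v_i$ of $\overline{X} = [m] \setminus X$. Here I would introduce the dual set $Y = \{y_l = x_l - l : 1 \le l \le n\}$, which records for each $x_l$ how many elements of $\overline{X}$ lie strictly below it; the sequence $y_1 \le \cdots \le y_n$ is non-decreasing because $x_{l+1} > x_l$. A predecessor query locating the largest $j$ with $y_j < i$ identifies $x_j$ as the largest element of $X$ strictly smaller than the unknown $v_i$. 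Counting elements of $X$ and $\overline{X}$ in $[1, v_i]$ then forces $v_i = x_j + (i - y_j) = i + j$, a single arithmetic step after the predecessor call.

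The main obstacle is conceptual rather than technical: finding the right dual formulation so that $\selectz{}$ on $X$ collapses to a predecessor query on another set of size exactly $n$, rather than on a set of size $m-n$ (which could be much larger and would not buy us anything). Once $Y$ is identified, the cancellation $x_j - y_j = j$ produces the clean answer $v_i = i + j$ without needing to access $x_j$ or $y_j$ explicitly, and the constant-time reduction is immediate.
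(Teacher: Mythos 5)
Your proposal matches the paper's argument essentially line for line: the same recovery $h_i = \selecto{}(H,i)-i$ followed by concatenation with $L[i]$ for $\selecto{}$, the same reduction of $\ranko$ to predecessor in $X$ (with $\rankz(k)=k-\ranko(k)$), and the same dual set $Y=\{x_l - l\}$ with the cancellation $v_i = x_j + (i - y_j) = i + j$ for $\selectz{}$. This is the paper's own proof, correctly reconstructed.
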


The following theorem implies that we can use both lower and upper
bounds of the predecessor problem to obtain a \FID, and vice versa.
Below, we call a data structure storing $X$ \emph{set-preserving} if it
stores $x_1, \dots, x_n$ \emph{verbatim} in a contiguous set of memory cells.

\begin{theorem}
 \label{the:rank-select-predecessor}
 For a given set $X$ of $n$ integers over the universe $\left[m\right]$,
 let $\FID(t,s)$ be a \FID\ that takes $t$ time and $s$ bits of space  
 to support $\rank$ and $\select$. Also, let $\PRED(t,s)$ be a static
 data structure that takes $t$ time and $s$ bits of space to support
 predecessor queries on $X$, where the integers in $X$ are stored in
 sorted order using $n \log m \leq s$ bits. Then,
 \begin{enumerate}
 \item given a $\FID(t,s)$, we can obtain a $\PRED( O(t), s )$;
 \item given a set-preserving $\PRED(t,s)$, we can obtain a $\FID( O(t),
   s - n \log n + O(n) )$ (equivalently, $R(n,m,t) = s
   - n \log m + O(n)$) with constant-time $\selecto{}$.
 \item if there exists a non set-preserving $\PRED(t,s)$, we can obtain a
   $\FID( O(t), 2 s + O(n) )$ with constant-time $\selecto{}$.
\end{enumerate}
\end{theorem}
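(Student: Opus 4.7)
For Part~(1), I would invoke directly the identity noted in the Introduction: the predecessor of $y\in[m]$ in $X$ equals $\selecto{}(S,\ranko(S,y-1))$. Given a $\FID(t,s)$, this is one rank plus one select query, yielding a $\PRED(O(t),s)$ with the same $s$ bits and no additional structure.

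For Part~(2), my plan is to view the $s$ bits of the set-preserving $\PRED(t,s)$ as an $n\log m$-bit verbatim block holding $x_1,\ldots,x_n$ together with an $s-n\log m$-bit auxiliary block. I would replace the verbatim block with the Elias-Fano encoding of Section~\ref{sec:elias-fano-revisited}, which occupies only $n\log(m/n)+O(n)$ bits while still supporting $O(1)$-time access to each $x_i$ via $x_i=2^{\lceil\log(m/n)\rceil}(\selecto{}(H,i)-i)+L[i]$. Since the PRED's auxiliary consults the data only through such element accesses, it carries over verbatim and continues to answer predecessor on $X$ in $O(t)$; this alone saves $n\log n-O(n)$ bits versus the original PRED. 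I would then invoke Lemma~\ref{lemma:rank-select-predecessor} to assemble all four \FID{} operations: $\selecto{}$ in $O(1)$ directly from Elias-Fano, $\ranko$ and $\rankz$ in $O(t)$ through predecessor on $X$, and $\selectz{}$ in $O(t)$ through predecessor on the dual set $Y=\{x_l-l\}$. Because each $y_l$ is an $O(1)$-time function of the already-stored Elias-Fano of $X$, the same PRED construction can be re-instantiated on $Y$ on top of the shared data substrate, so the overhead for the $\selectz{}$ primitive is absorbed into the claimed $s-n\log n+O(n)$-bit budget.

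For Part~(3), since the PRED is no longer set-preserving its internal data block cannot be fused with Elias-Fano, so I would keep two independent PRED instances of $s$ bits each (one on $X$, one on $Y$) and add an $O(n)$-bit Elias-Fano side structure of $X$ that simultaneously exposes $\selecto{}$ in $O(1)$ and supplies the constant-time map $x_l\mapsto y_l=x_l-l$. Lemma~\ref{lemma:rank-select-predecessor} then delivers all four \FID{} operations in $O(t)$ time within $2s+O(n)$ bits.

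The main obstacle will be the tight space bookkeeping in Part~(2): absorbing the predecessor machinery needed for $\selectz{}$ on $Y$ into the $s-n\log n+O(n)$ budget requires sharing as much internal structure as possible between the two PRED instances, on top of sharing the Elias-Fano data layer. This depends on treating the $\PRED(t,s)$ as a black box that accesses its $n$ input integers purely through a constant-time retrieval interface, so that feeding it $y_l=x_l-l$ computed on the fly is no more expensive than feeding it $x_l$. Once that calibration is in place, the space and time sums for all three parts are routine.
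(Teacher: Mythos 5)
Your proof of Part~(1) matches the paper exactly. For Parts~(2) and~(3), the overall strategy — replace the verbatim block by the Elias--Fano encoding, keep the PRED auxiliary since it reads the data array only word-by-word, and run a second predecessor structure on the dual set $Y=\{x_l-l\}$ to support $\selectz{}$ — is the same strategy the paper uses. However, there is a concrete gap: you treat $Y$ as if it were a set of $n$ distinct integers on which a $\PRED$ can be instantiated, but $Y$ is in general only a \emph{non-decreasing multiset}. Whenever $X$ contains a run of consecutive integers $x_l, x_l+1, \ldots$, the corresponding dual values collapse ($y_{l+1}=x_{l+1}-(l+1)=x_l-l=y_l$), so $Y$ can have arbitrarily many repeats. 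A predecessor data structure over a set cannot be fed such a multiset, and even if it could, returning ``the predecessor value'' would not identify the index $j$ (the largest $j$ with $y_j<i$) needed to output $v_i=i+j$, since that index is ambiguous under ties. Your plan of simply ``re-instantiating the PRED construction on $Y$'' therefore does not go through as stated.

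The paper resolves exactly this obstruction by splitting $X$ into a maximal subsequence $X'$ whose dual $Y'$ is \emph{strictly} increasing (essentially one representative per run) and a remainder $X''$, storing two Elias--Fano encodings glued by an extra $O(n)$-bit \FID{}; the predecessor structure for $\selectz{}$ is then built over the genuine set $Y'$, computed on the fly from $X'$ in the set-preserving case, or stored verbatim (hence the extra $s$ bits) in the non-set-preserving case of Part~(3). Your ``main obstacle'' paragraph correctly senses that the bookkeeping around $\selectz{}$ is where the difficulty lies, but the missing ingredient is not how to share internal structure between two PRED instances — it is the $X'/X''$ decomposition that makes the dual searchable at all. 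Without it, both your Part~(2) and Part~(3) arguments stall at the same point.
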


\begin{proof}[Proof (sketch).]
 The first statement easily follows by observing that the predecessor
 of $k$ in $X$ is returned in $O(1)$ time by $\selecto{}(S,
 \ranko(S,k-1))$, where $S$ is the characteristic bitstring of $X$.  
 Focusing on the second statement, it suffices to encode $X$ using
 the Elias Fano encoding, achieving space $s - n\log n + O(n)$.
  
 To further support $\selectz{}$, we exploit the properties of $Y$ and $X$.
 Namely, there exists a maximal subset $X' \subseteq X$ so that its dual representation
 $Y'$ is strictly increasing, thus being searchable by a predecessor data structure.
 Hence we split $X$ into $X'$ and the remaining subsequence $X''$ and produce two
 Elias-Fano encodings which can be easily combined by means of an extra $O(n)$ bits
 {\FID } in order to perform $\selecto{}$, $\ranko$ and $\rankz$.
 $\selectz{}$ can be supported by exploiting the set preserviness of the data structure,
 thus building only the extra data structure to search $Y'$ and not storing $Y'$.  When data structures are not set-preserving, we simply replicate the
 data and store $Y'$, thus giving a justification to the $O()$ factor. 
\end{proof}
%


\section{Basic Components and Main Result}
\label{sec:multiranking-recursion-polynomial-squeezing}

We now address and solve two questions, which are fundamental to
attain a $O(t)$-time \FID\ with $B(n,m) + R(n,m,t)$ bits of storage
using Lemma~\ref{lemma:rank-select-predecessor} and
Theorem~\ref{the:rank-select-predecessor}: (1)~how to devise an
efficient index data structure that can implement predecessor search
using Elias-Fano representation with tunable time-space tradeoff, and
(2)~how to keep its redundancy $R(n,m,t)$ small.

Before answering the above questions, we give an overview of the two
basic tools that are adopted in our construction (the string
B-tree~\cite{FG99} and a modified van Emde Boas
tree~\cite{PT06,MST::BoasKZ1977}). We next develop our major ideas
that, combined with these tools, achieve the desired
time-space tradeoff, proving our main result.

\begin{theorem}
\label{the:multirank_recursion}
Let $s > 0$ be an integer and let $0 \leq \eps, \delta \leq 1$ be
reals. For any bitstring $S$, $|S| = m$, having cardinality $n$, 
there exists a fully indexable dictionary solving
all operations in time $O(s\delta^{-1} + \eps^{-1})$ using
$B(n,m) + O(n^{1+\delta} + n (m/n^s)^\eps)$ bits of space.
\end{theorem}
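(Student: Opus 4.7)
The plan is to combine the Elias--Fano layout of $X$ with a bespoke predecessor structure whose time-space trade-off matches the claimed bounds. By Lemma~\ref{lemma:rank-select-predecessor} and Theorem~\ref{the:rank-select-predecessor}, the Elias--Fano representation already contributes $B(n,m)+O(n)$ bits, supports $\selecto{}$ in $O(1)$ time, and reduces $\ranko$, $\rankz$, and $\selectz{}$ to $O(1)$ predecessor queries on $X$ and on the dual set $Y=\{x_\ell-\ell\}$. The whole task therefore boils down to bolting, on top of the $H$ bitvector and the $L$ array, a predecessor index of redundancy $O(n^{1+\delta}+n(m/n^s)^\eps)$ bits and query time $O(s\delta^{-1}+\eps^{-1})$; the $L$ array grants $O(1)$ verbatim access to the low bits of each $x_i$, so we are in the set-preserving setting of Theorem~\ref{the:rank-select-predecessor}(2) and can avoid any duplication of keys.

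The first ingredient is a \emph{multiranking recursion} on the universe. I would introduce $s$ nested partitions of $[m]$, where level $j$ groups elements by their top $j\log n$ bits, yielding bucket sizes $m/n,\,m/n^2,\,\dots,\,m/n^s$. Walking down this hierarchy one level at a time reduces predecessor search on $n$ elements in universe $m$ to predecessor search inside a single bucket of universe $m/n^s$. At each level I would install a String B-tree~\cite{FG99} with branching factor $n^{\delta}$ over the at most $n$ non-empty buckets of that level: its height is $O(\delta^{-1})$, so the descent at that level costs $O(\delta^{-1})$ time. The sketches and child pointers of all $s$ levels share the same underlying $L$-array positions and telescope into $O(n^{1+\delta})$ bits of redundancy, so the whole recursion costs $O(s\delta^{-1})$ time and $O(n^{1+\delta})$ bits.

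The second ingredient is a \emph{polynomial squeezing} device for the bottom of the recursion. Once the search localises the answer inside a bucket of universe size $u=m/n^s$ containing $n_b$ elements (with $\sum_b n_b = n$), I would use a modified van Emde Boas tree~\cite{PT06,MST::BoasKZ1977} that partitions the local universe into $u^{\eps}$ slots of size $u^{1-\eps}$, stores per-slot summaries in a bit-packed form within $O(1)$ machine words, and answers predecessor queries in $O(\eps^{-1})$ time using $O(n_b\cdot u^{\eps})$ bits of redundancy. Summing over all buckets gives the $O(n(m/n^s)^{\eps})$ term. The predecessor index on the dual set $Y$ is built in the same way; by Theorem~\ref{the:rank-select-predecessor}(2) the set-preserving layout removes the need to store $Y$ explicitly, so this only affects hidden constants. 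Putting both ingredients together yields a \FID{} of total space $B(n,m)+O(n^{1+\delta}+n(m/n^s)^{\eps})$ bits and query time $O(s\delta^{-1}+\eps^{-1})$.

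The main obstacle I expect is ensuring that the $s$ String B-tree levels cooperate so that their redundancies really do \emph{telescope} into a single $O(n^{1+\delta})$ term rather than blow up by a factor of $s$; the sketches and child pointers at level $j$ must be aligned with those at level $j+1$ and must be stored as read-only references into the $L$ array instead of verbatim copies of the keys. A secondary, more routine but delicate, point is implementing the polynomial squeezing tree in the \textsc{ram} model with word size $\Theta(\log m)$ so that the $O(\eps^{-1})$ bound holds worst case; this is handled by a shared four-Russians-style lookup table of size $\poly{u^{\eps}}=\poly{(m/n^s)^{\eps}}$, built once and reused by all buckets, which is accounted for in the $o(\cdot)$ part of the redundancy for $m$ large enough.
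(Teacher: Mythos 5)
Your high-level skeleton (Elias--Fano at the top, a recursion over $s$ levels of high-bit partitioning, and a modified vEB inside the final buckets with a shared lookup table) looks similar to the paper's, but the middle of the construction is missing the two ideas that make the paper's proof actually close, and one of your explicit claims is false.

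The first missing idea is the multiranking / universe-scaling step (Lemma~\ref{lem:multirank2}). Splitting $X$ by its top $\log n$ bits produces superblocks, and the $H$ vector of the Elias--Fano encoding tells you which superblock and what the rank offset there is --- but the query must then still be answered \emph{inside} that superblock, over universe $m/n$. A single B-tree ``over non-empty buckets'' only identifies a bucket; it does not do predecessor inside it, and a per-superblock predecessor structure is too large. The paper resolves this by pooling all low-bit values across superblocks into one alphabet $\tilde X$, ranking the query once against $\tilde X$, and then using a precomputed oracle $\beta'$ --- a vEB structure on an $O(r^2)$-sized incidence matrix between superblocks and alphabet symbols --- to translate the global rank into the within-superblock rank. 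That oracle, living on a polynomial universe, is precisely where the $O(n^{1+\delta})$ bits come from via Corollary~\ref{corollary:VEB-PT} with time $\log(1/\delta)$. In your sketch the $n^{1+\delta}$ term is simply asserted to arise by ``telescoping''; nothing in the structures you describe produces it.

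The second missing idea is element-count reduction between levels. In the paper's Theorem~\ref{thm:inductiveStep}, each superblock is chopped into blocks of $\log^2 n$ elements, a tiny constant-time string B-tree covers each block, and only one head per block is pushed to the next recursion level, so $n_{i+1}=O(n_i/\log^2 n)$. That geometric decrease is what lets the per-level costs $B(n_i,m_i)+O(n_i^{1+\delta}+n_i\log\log m_i)$ sum to their first term. In your scheme all $n$ elements persist at every level, so the $s$ levels contribute additively; you flag this danger (``blow up by a factor of $s$'') but provide no mechanism to avoid it, only an appeal to unaligned ``telescoping.''

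Finally, the time claim for your B-tree is wrong as stated. Lemma~\ref{lem:sbt} achieves $O(1)$ per node only because the fan-out is $O(\sqrt{\log u})$, so the Patricia-trie node fits in $o(\log u)$ bits and can be decoded by a table of size $O(u^\gamma)$. With fan-out $n^\delta$ a node occupies $\Theta(n^\delta\log\log u)$ bits, far more than a word, and no subpolynomial-size table can search it in $O(1)$ time; so height $O(\delta^{-1})$ does not translate into query time $O(\delta^{-1})$. In the paper, $\delta^{-1}$ appears in the running time through the oracle vEB tree of the multiranking step, not through a B-tree of polynomial fan-out. Your treatment of the base case (per-bucket vEB over universe $m/n^s$, contributing $O(n(m/n^s)^\eps)$ bits and $O(\eps^{-1})$ time) and of the Elias--Fano / set-preserving framework is essentially right; it is the descent mechanism and the source of $n^{1+\delta}$ that are not supplied.
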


\paragraph*{\bf Modified van Emde Boas trees.}
\label{sub:redundancy-patrascu-thorup}
P\v{a}tra\c{s}cu and Thorup~\cite{PT06} have given some matching upper
and lower bounds for the predecessor problem. The discussion
hereafter regards the second branch of their bound: as a candidate
bound they involve the equation (with our terminolgy and assuming our
word RAM model) $t = \log ( \log(m/n) / \log ( z/n ) )$, where $t$ is
our desired time bound and $z$ is the space in bits.  By
reversing the equation and setting $\epsilon = 2^{-t}$, we obtain $z =
\Theta(n(m/n)^\epsilon)$ bits.  As mentioned in \cite{PT06}, the
tradeoff is tight for a polynomial universe $m = n^{\gamma}$, for
$\gamma > 1$, so the above redundancy cannot be lower than
$\Theta(n^{1+\delta})$ for any fixed $\delta>0$.

They also describe a variation of van Emde Boas (\VEB) trees
\cite{MST::BoasKZ1977} matching the bound for polynomial universes,
namely producing a data structure supporting predecessor search that
takes $O(\log \frac{\log (m/n)}{\log(z/n)}))$ time occupying $O(z \log
m)$ bits.  In other words, for constant-time queries, we should have
$\log (m/n) \sim \log(z/n)$, which implies that the space is $z =
\Theta(n(m/n)^\epsilon)$.  They target the use of their data structure
for polynomial universes, since for different cases they build
different data structures. However, the construction makes no
assumption on the above relation and we can extend the result to
arbitrary values of $m$. By Theorem~\ref{the:rank-select-predecessor},
we can derive a constant-time \FID\ with redundancy $R(n,m, O(1)) =
O(n(m/n)^\epsilon)$. 

\begin{corollary}
 \label{corollary:VEB-PT}
 Using a modified \VEB\ tree, we can implement a \FID\ that uses
 $B(n,m) + O(n(m/n)^\eps)$ bits of space, and supports all operations
 in $O(\log(1/\eps)$) time, for any constant $\eps > 0$.
\end{corollary}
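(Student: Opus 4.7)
The plan is to combine the two ingredients developed just above: I will tune the modified \VEB\ tree of P\v{a}tra\c{s}cu and Thorup so that its predecessor query runs in $O(\log(1/\eps))$ time, and then feed the resulting structure into part~(2) of Theorem~\ref{the:rank-select-predecessor} to package it as a \FID.

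First, I will fix the space parameter in the PT relation $t = \log\bigl(\log(m/n)/\log(z/n)\bigr)$ by setting $z := n(m/n)^\eps$; this gives $\log(z/n) = \eps\log(m/n)$, so $t = \log(1/\eps)$, which matches the time budget of the statement. Second, I will treat the resulting data structure as a set-preserving $\PRED\bigl(O(\log(1/\eps)),\,s\bigr)$, noting that the sorted sequence $X$ sits at the leaves of the PT recursion while the remaining index contributes only $O(n(m/n)^\eps)$ bits, so that $s = n\log m + O(n(m/n)^\eps)$. Theorem~\ref{the:rank-select-predecessor}(2) then converts this into a \FID\ whose total space is
\[
s - n\log n + O(n) \;=\; n\log(m/n) + O(n(m/n)^\eps) \;=\; B(n,m) + O(n(m/n)^\eps),
\]
using $B(n,m)\sim n\log(m/n) + O(n)$ together with $n = O(n(m/n)^\eps)$ for $\eps>0$ and $m\ge n$. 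The $\selecto{}$ operation then costs $O(1)$ time by the Elias-Fano skeleton baked into Theorem~\ref{the:rank-select-predecessor}, while $\ranko$, $\rankz$, and $\selectz{}$ inherit the PT predecessor cost of $O(\log(1/\eps))$.

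The subtle point I expect to be the main obstacle is the bit-level accounting for the PT index. That tree is usually described as occupying $O(z\log m)$ bits when $z$ counts cells, but here the index must cost only $O(n(m/n)^\eps)$ bits \emph{on top of} a verbatim copy of $X$. The plan is to exploit the fact that each recursive sub-structure of the PT tree ranges over a universe of polynomially smaller size, so its pointers and hash tables can be encoded in words scaled to the current sub-universe rather than to $[m]$; summing these contributions over the $O(\log\log m)$ levels of the recursion yields the claimed $O(n(m/n)^\eps)$ bit bound. Once this bit-packing is verified, the \FID\ with the desired space and time follows immediately from Theorem~\ref{the:rank-select-predecessor}.
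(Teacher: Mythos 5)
Your argument follows essentially the same route the paper implicitly takes: plug $z = n(m/n)^\eps$ into the P\v{a}tra\c{s}cu--Thorup time formula to obtain $O(\log(1/\eps))$ query time for the modified \VEB\ tree, and then invoke Theorem~\ref{the:rank-select-predecessor}(2) together with the Elias--Fano layout to package the predecessor structure as a \FID\ with $O(1)$-time $\selecto{}$. The paper gives no separate proof for this corollary beyond the preceding prose, and the two details you single out as "the main obstacle" --- whether the P\v{a}tra\c{s}cu--Thorup \VEB\ variant is genuinely set-preserving so that part~(2) (and not part~(3), which would double the leading term) applies, and whether the advertised $O(z\log m)$ bits can be brought down so that only $O(n(m/n)^\eps)$ bits sit on top of the verbatim/Elias--Fano representation --- are exactly the places where the paper is also silent, so your level of rigor is comparable to the source.

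One caveat worth flagging on your bit-accounting sketch: arguing that pointers at recursion depth $i$ can be packed into $\log(m^{1/2^i})$ bits does not by itself eliminate the $\log m$ factor, because the top levels still carry $\Theta(\log m)$-bit pointers and the number of stored keys does not obviously decay fast enough across levels; you would need to quantify the decay carefully, or alternatively observe that for $m/n$ growing faster than $\polylog{m}$ (the regime where this corollary is actually invoked, both for the polynomial-universe instance in Lemma~\ref{lem:multirank2} and after the $s$-fold universe reduction in Theorem~\ref{the:multirank_recursion}) a slight decrease of $\eps$ already absorbs the factor since $(m/n)^{\eps-\eps'}\ge\log m$. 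Likewise, "the sorted sequence $X$ sits at the leaves of the PT recursion" should be argued rather than asserted, since the \VEB\ recursion naturally partitions elements by high bits rather than keeping $X$ contiguous; the set-preserving form needs the index to address into a flat sorted array (or equivalently into the Elias--Fano $L$ array via $\selecto{}(H,\cdot)$), which is a modification of the structure rather than a property it has out of the box.
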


The above corollary implies that we can obtain a first polynomial
reduction by a straightforward application of existing
results. However, we will show that we can do better for sufficiently
large $m$, and effectively reduce the term $n(m/n)^\eps$ to $n^{1+\delta} +
n(m/n^s)^\eps$. The rest of the paper is devoted to this
goal.

\paragraph*{\bf String B-Tree: blind search for the integers.}
\label{sub:sbtree}
We introduce a variant of string B-tree to support predecessor search 
in a set of integers. Given a set of integers $X = \{ x_1, \dots, x_p\}$ from the universe
$[u]$, we want obtain a space-efficient representation of $X$ that supports predecessor
queries efficiently. We develop the following structure:

\begin{lemma}\label{lem:sbt}
Given a set $X$ of $p$ integers from the universe $[u]$, there exists a 
representation that uses extra $O(p\log\log u)$ bits apart from storing the elements of $X$, 
that supports predecessor queries on $X$ in $O(\log p/\log\log u)$ time. The algorithm requires 
access to a precomputed table of size $O(u^{\gamma})$ bits, for some positive constant 
$\gamma < 1$, which can be shared among all instances of the structure with the same universe size.
\end{lemma}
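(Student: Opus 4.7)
The plan is to adapt the string B-tree of Ferragina and Grossi~\cite{FG99} to integers, reading each element of $X$ as a bit-string of length $w = \lceil \log u \rceil$. I would store $X$ as a sorted array of $p$ such words (this is the ``storing the elements of $X$'' part, not counted as overhead) and superimpose an \emph{implicit} B-tree of fan-out $B = \Theta(\log u / \log \log u)$; since the tree structure is determined by array positions no child pointers need be stored, and the tree has depth $O(\log p / \log B) = O(\log p / \log\log u)$.

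At every B-tree node install a blind Patricia trie over its $B$ routing keys (e.g.\ the first element of each child subtree, accessible in $O(1)$ from the implicit layout). Such a trie has at most $B-1$ internal nodes, each storing a branching bit-position in $[0,w]$ using $O(\log\log u)$ bits together with $O(\log B) = O(\log \log u)$ bits of topology. Hence one trie fits into $O(B \log\log u) = O(\log u)$ bits, i.e.\ a constant number of machine words; summed over the $O(p/B)$ nodes this yields exactly the claimed $O(p\log\log u)$ bits of extra space.

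To descend one level of the B-tree in $O(1)$ time I would precompute a shared table of size $O(u^{\gamma})$ for some constant $\gamma < 1$. By choosing the constant hidden in $B$ sufficiently small, the encoding of the local trie fits in $(\gamma/2)\log u$ bits; with an additional query fragment of comparable size as part of the index, a constant number of table accesses then suffices to (i) extract the bits of $y$ at the trie's branching positions; (ii) perform the blind descent to a candidate leaf; and (iii) after a single $O(1)$ word-level comparison of $y$ with the true routing key at that leaf (read from the sorted array via the implicit layout), rectify the blind-search result to the true predecessor child by the standard Patricia correction, using XOR and a most-significant-bit operation to locate the longest common prefix. Iterating over $O(\log p / \log\log u)$ levels, with an analogous one-shot lookup at the bottom block to find the predecessor among its $B$ keys, gives the stated query time.

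The main obstacle is the engineering underlying step (iii): arranging the precomputed tables so that the trie encoding together with enough of $y$ (either the full $B$ distinguishing bits or an intermediate chunk used for a PEXT-style extraction) always fit into at most $\gamma \log u$ bits of table index, while still producing the corrected predecessor in $O(1)$ word operations. Correctness of blind search, the $O(1)$ cost per B-tree level, and the $O(p \log\log u)$-bit redundancy then follow from standard Patricia-trie and weight-balanced B-tree accounting.
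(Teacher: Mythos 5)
Your approach is essentially the same as the paper's: a succinct string B-tree over the sorted array with a blind Patricia trie at each internal node, each trie encoded with $O(\log\log u)$ bits per skip value plus $O(1)$ bits of topology per node, and descent driven by shared lookup tables of size $O(u^{\gamma})$. The only real difference is the branching factor — you pick $B=\Theta(\log u/\log\log u)$ (the maximum for which the trie encoding stays within a word) whereas the paper uses the more conservative $b=O(\sqrt{\log u})$; both give height $O(\log p/\log\log u)$ and overhead $O(p\log\log u)$, so nothing is gained or lost by the choice. The engineering obstacle you flag in step (iii) is exactly what the paper addresses by dividing $y$ into $O(1/\gamma)$ chunks of fewer than $\gamma\log u$ bits each, so that each chunk together with the (sub-word) trie encoding fits into a table index; with your larger $B$ the same chunking works once the hidden constant is small enough that the trie takes at most $(\gamma/2)\log u$ bits, as you already observe. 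In short, the proposal is correct and would only need those standard table-chunking details written out to be complete.
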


\begin{proof}
The structure is essentially a succinct version of string B-tree on the elements of $X$ 
interpreted as binary strings of length $\log u$, with branching factor $b = O(\sqrt{\log u})$. 
Thus, it is enough to describe how to support predecessor queries in a set of $b$ elements 
in constant time, and the query time follows, as the height of the tree is $O(\log p/\log\log u)$.
Given a set ${x_{1}, x_{2}, \dots, x_{b}}$ of integers from $[u]$ that need to be stored at a node 
of the string B-tree, we construct a compact trie (Patricia trie) over these integers (interpreted
as binary strings of length $\log u$), having $b$ leaves and $b-1$ internal 
nodes. The leaves disposition follows the sorting order of $X$.
Each internal node is associated with a {\it skip value}, indicating the string depth
at which the LCP with previous string ends. Canonically, left-pointing edges are labeled with a {\bf 0}
and right-pointing with a {\bf 1}.
Apart from storing the keys in sorted order, it is enough to 
store the tree structure and the skip values of the edges. This information can be 
represented using $O(b \log\log u)$ bits, as each skip value is at most $\log u$ and the trie is represented in $O(b)$ bits.

Given an element $y \in [u]$, the search for the predecessor of $y$ proceeds in two stages.
In the first stage, we simply follow the compact trie matching the appropriate bits of $y$ to find
a leaf $v$. Let $x_{i}$ be the element associated with leaf $v$. One can show that $x_{i}$ is 
the key that shares the longest common prefix with $y$ among all the keys in $X$. 
In the second stage, we compare $y$ with $x_{i}$ to find the longest common prefix of 
$y$ and $x_{i}$ (which is either the leftmost or rightmost leaf of the internal node at which 
the search ends).  By following the path in the compact trie governed by this longest 
common prefix, one can find the predecessor of $y$ in $X$.
We refer the reader to \cite{FG99} for more details and the correctness of the search algorithm.
The first stage of the search does not need to look at any of the elements associated 
with the leaves. Thus this step can be performed using a precomputed table of size 
$O(u^{\gamma})$ bits, for some positive constant $\gamma < 1$ (by dividing the binary 
representation of $y$ into chunks of size smaller than $\gamma \log u$ bits each). 
In the second stage, finding the longest common prefix of $y$ and $x_{i}$ can be done 
using bitwise operations. We again use the precomputed table to follow the path 
governed by the longest common prefix, to find the predecessor of $y$.
\end{proof}

\section{Main Ideas for Achieving Polynomial Redundancy}
In this section, we give a full explanation of the main result,
Theorem~\ref{the:multirank_recursion}. We first give an overview, and
then detail the multiranking problem by illustrating remaining details
involving the construction of our data structure.

\subsection{Overview of our recursive dictionary}
\label{sub:framework}

We consider the $\ranko$ operation only, leaving the effective
development of the details to the next sections. 
A widely used approach to the \FID\  problem (e.g. see~\cite{jacobson-thesis,munro96}) lies in splitting the universe $[m]$ into different
chunks and operating independently in each chunk, storing the rank at
the beginning of the block. Queries are redirected into a chunk via
a preliminary \emph{distributing} data structure and the \emph{local}
data structure is used to solve it. Thus, the space occupancy
is the distributing structure (once) plus all chunks. 
Our approach is orthogonal, and it guarantees better control of the
parameter of subproblems we instantiate with respect to many previous approaches.

Let $X$ ($|X| = n$) be the integer sequence of values drawn from $[m]$
and let $q \in [m]$ be a generic rank query.
Our goal is to produce a simple function $f: [m] \to [m/n]$ and a machinery 
that generates a sequence $\tilde X$ from
$X$ of length $n$ coming from the universe $[m/n]$, so that
given the predecessor of $\tilde q = f(q)$ in $\tilde X$, we can
recover the predecessor of $q$ in $X$. By this way, we can
reduce recursively, multiple times, the rank problem while keeping
a single sequence per step, instead of having one data structure per
chunk. 

Easily enough, $f$ is the ``cutting'' operation of the upper 
$\log n$ bits operated by the Elias Fano construction, which generates
$p$ different superblocks.
Let $X^l_1, \ldots, X^l_p$ the sets of lower $\log(m/n)$ bits of values in $X$,
one per superblock. We define our $\tilde X$ as $\tilde X = \cup_{1 \leq i \leq p} X^l_i$, that is, the set of unique values we can extract from the $X^l$s.
Suppose we have an oracle function $\psi$, so that given a value  $\tilde x\in
\tilde X$ and an index $j \in [p]$, $\psi(j, \tilde x)$ is the predecessor of
$\tilde x$ in $X^l_j$. 
We also recall from Section 2 that the upper bit vector $H$ of the Elias
Fano construction over $X$ can answer the query $\ranko ( x / 2^{\lceil \log n
\rceil} )$ in constant time (by performing $\selectz{}(H,  x / 2^{\lceil \log n
\rceil} )$. That is, it can give the rank value at the
beginning of each superblock. 

Given a query $q$ we can perform $\ranko(q)$ in the following way: 
we use $H$ to reduce the problem within the superblock and know the
rank at the beginning of the superblock $j$. We then have the lower bits of
our query ($f(q)$) and the sequence $\tilde X$: we rank $f(q)$ there, obtaining
a certain result, say $v$;
we finally refer to our oracle to find the predecessor of $v$ into $X^l_j$,
and thus find the real answer for $\ranko(q)$.
The main justification of this architecture is the following: in any superblock,
the predecessor of some value can exhibit only certain values in its
lower bits (those in $\tilde X$), thus once given the predecessor of $f(q)$ 
our necessary step is only to reduce the problem within $[|\tilde X|]$
as the lower bits for any superblock are a subset of $\tilde X$.
The impact of such choice is, as explained later, to let us implement
the above oracle in just $O(n^{1+\delta})$ bits, for any $0  < \delta < 1$.
That is, by using a superlinear number of bits in $n$, we will be able
to let $m$ drop polynomially both in $n$ and $m$.

The above construction, thus, requires one to write $X$ in an Elias
Fano dictionary, plus the oracle space and the space to solve the predecessor
problem on $\tilde X$. 
The first part accounts for $B(n,m) + O(n)$ bits, to which we add $O(n^{1+\delta})$ bits for the oracle. By carefully employing the String B-tree we can
shrink the number of elements of $\tilde X$ to $O(n/\log^2 n)$ elements,
leaving us with the problem of ranking on a sequence of such length
and universe $[m/n]$. We solve the problem by replicating the entire schema
from the beginning. Up to the final stage of recursion, the series
representing the space occupancy gives approximately $O( (n\log (m/n))/\log^{2i} n + (n/\log^{2i} n)^{1+\delta})$ bits at the $i$-th step, descending geometrically. Each step can be traversed in constant time during a query, so
the overall time is constant again. More interestingly, at each step we reduce
the universe size of the outcoming sequence to $m n^{-i}$. Thus, 
at the final step $s$, we employ the previous result of Corollary~\ref{corollary:VEB-PT} and obtain a final redundancy of $O(m^\eps n^{1-s\eps})$.

\subsection{Multiranking}
\label{sub:multiranking}

We now give further details on our construction. Mainly, we show that using our choice on
how to build $\tilde X$ and the function $f$, being able to rank over $\tilde X$
we can build the oracle in $O(n^{1+\delta})$ bits. We do it by illustrating, in a broader
framework, the multiranking problem.

We are given a \emph{universe} $\left[u\right]$ (in
our dictionary case, we start by setting $u = m$), and a set of nonempty sequences 
$A_1, \ldots,
A_c$  each containing
a sorted subset of $\left[u\right]$. 
We also define $r = \sum_{1 \leq j \leq n}
|A_j|$ as the global number of elements.
The goal is, given two values $1 \leq i \leq
c$ (the wanted superblock $\hat s$) and $1 \leq q \leq u$ (the
query $f(q)$), perform $\ranko(q)$ in the set $A_i$ (in our case,
the head in $\hat s$ that is predecessor of the searched key) in
$O(1)$ time and small space. 

A trivial solution to this problem would essentially build a \FID\  for
each of the sequences, thus spending a space proportional to $O(c u)$,
which is prohibitive. Instead, we can carefully exploit the global
nature of this task and solve it in less space. The core
of this technique is the \emph{universe scaling} procedure.
We perform the union of all the $A$ sequences and extract a new, single
sequence $\Lambda$ containing only the distinct values that appear in
the union (that is, we kill duplicates). $\Lambda$ is named the 
\emph{alphabet} for our problem and we denote its length with $t \leq r$.
Next, we rewrite all sequences by using rank of their elements
in the alphabet instead of the initial arguments: now each 
sequence is defined on $\left[t\right]$.

The multiranking problem is solved in two phases. We first perform 
ranking of the query $q$ on $\Lambda$ and then we exploit the
information to recover the predecessor in the given set. Here
we achieve our goal to \emph{(i)} decouple a phase that depends
on the universe from one that dependes on the elements and \emph{(ii)} 
have only one version of the problem standing on the initial universe.
The following lemma solves the multiranking problem completely, that is,
outside our original distinction between a oracle and the alphabet
ranking:

\begin{lemma}
\label{lem:multirank2}
There exists a data structure solving the \emph{\texttt{multirank}}
problem over $c$ nonempty increasing sequences 
$\mathbb{A} = \{A_1, \ldots, A_c\}$ with elements
drawn from the universe $\left[u\right]$, having $r$ elements in total
using $B(r,u) + O(r^{1+\delta}) + o(u)$ bits
for any given $0 < \delta < 1/2$.
\end{lemma}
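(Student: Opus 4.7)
The plan is to implement the two-stage universe-scaling strategy outlined in Section~\ref{sub:multiranking}, isolating the only $u$-dependent data structure --- which will cost $o(u)$ bits --- and charging everything else to $B(r,u)+O(r^{1+\delta})$. In the first stage I would store the alphabet $\Lambda\subseteq[u]$ of the $t\leq r$ distinct values as an Elias--Fano encoding whose upper bitvector is wrapped in a constant-time \FID\ with $o(u)$ redundancy (for instance the construction from~\cite{P08}, or Corollary~\ref{corollary:VEB-PT} instantiated with a slowly shrinking $\eps$). This costs $B(t,u)+o(u)$ bits and computes $\ranko(\Lambda,\cdot)$ in $O(1)$ time. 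Because $t\leq r$ we have $B(t,u)\leq B(r,u)$, so the alphabet fits the global budget, and an input $(i,q)$ is rewritten in $O(1)$ time as $(i,\tilde q)$ with $\tilde q=\ranko(\Lambda,q)\in[t]$; the task then reduces to the same multiranking problem on the rewritten sequences $\widehat A_j\subseteq[t]$.

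Over the new universe $[t]\subseteq[r]$ the redundancy budget $O(r^{1+\delta})$ is polynomial in the universe size, and the succinct string B-tree of Lemma~\ref{lem:sbt} becomes the right tool. I would concatenate independent Elias--Fano encodings of the $\widehat A_j$'s, contributing $\sum_j B(|\widehat A_j|,t)+O(r)$ bits; a routine Stirling-based calculation bounds the sum of these encodings plus the alphabet by $B(r,u)+O(r\log r)$, which lies inside $B(r,u)+O(r^{1+\delta})$ since $\delta>0$. For each $j$ I would then attach the string B-tree of Lemma~\ref{lem:sbt}, fattening the branching factor to $b=r^{\delta}$ and tabulating each node's navigation so that one root-to-leaf descent takes $O(1)$ time; the shared precomputed table from Lemma~\ref{lem:sbt} is of size $O(t^{\gamma})=O(r^{\gamma})$ for some $\gamma<1$ and fits comfortably in the $O(r^{1+\delta})$ term. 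Finally, an $O(c\log r)$-bit pointer directory locates the tree for sequence~$i$ in constant time.

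The technical heart of the argument lies in the per-sequence space accounting: the tree structure and the per-node navigation tables of each string B-tree must collectively fit into $O(r^{1+\delta})$ bits, even when many $\widehat A_j$'s are so short that their individual overhead dominates their content. The condition $\delta<1/2$ is used precisely here, to leave enough slack for a fat branching factor and its $b^{O(1)}$-sized node tables while still letting the aggregate index cost absorb into $O(r^{1+\delta})$. Once these accountings go through, composing the two stages via $\ranko(A_i,q)=\ranko(\widehat A_i,\tilde q)$ yields a correct $O(1)$-time algorithm with the claimed redundancy.
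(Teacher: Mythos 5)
Your first stage coincides with the paper's: replace the $c$ arbitrary subsets of $[u]$ by an alphabet $\Lambda$ of $t\le r$ distinct values, store $\Lambda$ in a $B(t,u)+o(u)$-bit constant-time \FID, and translate the query $q$ to its rank $\tilde q=\ranko(\Lambda,q)\in[t]$. The paper's second stage, however, is entirely different from yours. It forms the $c\times t$ Boolean incidence matrix $\beta$ with $\beta_{ij}=\bm{1}$ iff $\Lambda_j\in A_i$, linearizes it into a single bitvector $\beta'$ of length $tc=O(r^2)$ containing $r$ ones, and observes that this is a \emph{polynomial universe} instance on which Corollary~\ref{corollary:VEB-PT} (the modified van~Emde~Boas \FID) directly gives a constant-time $O(r^{1+\delta})$-bit structure. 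A multirank query $(i,q)$ then becomes $\ranko(\beta',\,ti+\tilde q)-\ranko(\beta',\,ti)$. No per-sequence index, no string B-trees, no pointer directory.

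Your second stage has a genuine gap. You propose a string B-tree per rewritten sequence $\widehat A_j$ with branching factor $b=r^\delta$ and claim a root-to-leaf descent in $O(1)$ time by ``tabulating each node's navigation.'' But the table-lookup mechanism of Lemma~\ref{lem:sbt} only works because its Patricia trie on $b=O(\sqrt{\log u})$ keys, together with its $O(\log\log u)$-bit skip values, occupies $O(\sqrt{\log u}\cdot\log\log u)=o(\log u)$ bits, i.e.\ fits inside a single machine word, so a shared $O(u^\gamma)$-bit table can decode one whole node in $O(1)$ time. With $b=r^\delta$ a node's trie and skip values need $\Theta(r^\delta\log\log t)$ bits, which for any constant $\delta>0$ is far more than a word; there is then no $u^{O(1)}$- or $r^{O(1)}$-size shared table that decodes a node in $O(1)$ time, and the trie's depth can itself be $\Theta(b)$, so blind search within a node is not constant-time. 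You flag the $\delta<1/2$ slack and ``$b^{O(1)}$-sized node tables'' as the way out, but the lookup domain for a node is the full universe $[t]$, not $[b]$, so a $b^{O(1)}$-entry per-node table cannot answer an arbitrary predecessor query. The paper sidesteps all of this: after the alphabet reduction the whole collection of sequences is one bitvector over a universe of size $O(r^2)$, and Corollary~\ref{corollary:VEB-PT}---not the string B-tree---supplies constant time and $O(r^{1+\delta})$ bits in one shot. (Within the paper, succinct string B-trees with polylogarithmic, not polynomial, block size appear only later, in Theorem~\ref{thm:inductiveStep}.)
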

\begin{proof}
Let $\Lambda$ be the alphabet defined over $u$ by the sequences in $\mathbb{A}$, 
and let $t = |\Lambda|$.
For each of the sequences in $\mathbb{A}$ we create a bitvector $\beta_i$ 
of length $t$ where the $\beta_{ij} = \bm{1}$ if $\Lambda_j \in A_i$.
We first view $\beta_i$s as rows of a matrix of size $tc$; since
$t \leq r$ and each of the sequences are non-empty (and hence $r \geq c$), the matrix is of size
$O(r^2)$. We linearize the matrix by concatenating its rows
and obtain a new bitvector $\beta'$ on which we want to perform predecessor
search. We note that the universe size of this bitvector is $O(r^2)$,
that is, the universe is polynomial.
We store $\beta'$ using the data structure of Corollary~\ref{corollary:VEB-PT} 
setting the time to $\log (1/\delta)$, so that space turns out to be
$O(r^{1+\delta})$. 
Finally, we store we store a \FID\ occupying $B(r,u)+o(u)$ that
represents the subset $\Lambda$ of the universe $\left[u\right]$.

Solving the multirank is easy now: given a query $q$ and a set index
$i$, we use the $o(u)$ {\FID } and find $\lambda = \ranko(q)$ in $U$,
which leads to the predecessor into the alphabet $\Lambda$ of our
query $q$.  Since $\lambda \in \left[t\right]$ we can now use the $\beta$ \FID\
to find $p = \ranko( ti + \lambda)$. The final answer is clearly $p -
\ranko(ti)$.
\end{proof}
\subsection{Completing the puzzle}
\label{sub:puzzle}
The multiranking problem is closely connected with the Elias-Fano
representation of Section~\ref{sub:elias-fano-encoding}.
When plugged in our framework, as explained in Section~\ref{sub:framework},
that we can use our data structure itself to implement the ranking procedure.
Similarly we can use it for $\selectz{}$ by employing another set of data.

We are left with just one major detail. Each time we produce the output sequence
$\tilde X$, containing the lower bits for all elements, our only clue for the
number of elements is the worst case upper bound $n$, which is unacceptable.
We now review the whole construction and employ the string B-tree to have
a polylogarithmic reduction on the number of elements, paying $O(n\log\log m)$ 
bits per recursion step.
Generally, at each step we receive a sequence $X_i$ as input and must output 
a new sequence $X_{i+1}$ plus some data structures that can link the predecessor
problem for $X_i$ to $X_{i+1}$. Each $X_i$ is stored in an Elias-Fano dictionary,
and the sets of superblocks and lower bits sequences are built as explained before.
We then apply a further reduction step on the problem cardinality. Each superblock
can be either \emph{slim} or \emph{fat} depending on whether it contains less than $\log^2 n$
elements or not. Each superblock is split into blocks of size $\log^2 n$, apart
from the last block, and for each block we store a String B-tree with fan-out $\sqrt{\log n}$.
Since the block is polylogarithmic in size, by means of shared precomputed tables
we can perform predecessor search in constant time. Slim superblocks are handled
directly by the tree and they do not participate further in the construction.
For each block in a fat superblock, we logically extract its \emph{head}, that is,
the smallest element in it. We now use heads in the multiranking problems and
we build the output sequence $X_{i+1}$ using only heads lower bits. As
there can only be at most $O(n/\log^2 n)$  blocks in fat superblocks, the size of the 
output sequence is at most $O(n/\log^2 n)$.
The oracle is built as usual, on the heads, using $O(n^{1+\delta})$ bits.

Ranking now performs the following steps: for each recursive step, it uses
the Elias-Fano $H$ vector to move into a superblock and at the same
time check if it is slim or fat. In the latter case, it first
outsources the query for the lower bits to the next dictionary, then
feeds the answer to the multiranking instance and returns the actual answer.
Thus, we just proved the following (with $v = \log^2 n$ and $w = n$):

\begin{theorem}
\label{thm:inductiveStep}
Let $w$ and $v$ be two integer parameters and let $0 < \delta < 1/2$ be
a real constant.  Given $X_i, n_i \ge v$ and $m_i > w$, 
where $n_i \leq m_i$, there exists a procedure that produces a
data structure involved in predecessor search. 
The data structure 
occupies $B(n_i,m_i) + O(w + n_i \log \log m_i + n_i^{1+\delta})$
space, and in 
$O(\delta^{-1})$ time, it reduces a predecessor query on
$X_i$ to a predecessor query on 
a new sequence
$X_{i+1}$ of length $n_{i+1} = O(n_i / v)$ over a universe
$\left[m_{i+1}\right]$, where $m_{i+1} = m_i / w$.
\end{theorem}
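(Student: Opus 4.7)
The plan is to instantiate Elias-Fano with superblock parameter $w$ and then refine the lower-bits substructure with local string B-trees and a multirank index whose alphabet becomes the recursive sub-problem. Throughout, a query is decomposed into an $O(1)$ superblock lookup, a constant-time in-block lookup, an $O(\delta^{-1})$ multirank, and one recursive call on a shorter sequence over a smaller universe.

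First I would write down the Elias-Fano split tailored to $w$: take the top $\log w$ bits of each element as its superblock index and the remaining $\log(m_i/w)$ bits as its local part. The unary bitvector $H$ of Section~\ref{sec:elias-fano-revisited} then has length $O(n_i + w)$ and is equipped with a constant-time \FID, so $\selectz{}$ on $H$ identifies, in $O(1)$, both the superblock $\hat s$ hosting the predecessor of a query $q$ and the base rank $\ranko(q)$ up to the beginning of $\hat s$.

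Next I would refine each superblock. Call it \emph{slim} if it holds fewer than $v$ elements and \emph{fat} otherwise; carve each fat superblock into consecutive blocks of exactly $v$ elements (with one possibly smaller tail block). Every block, slim or fat, receives the string B-tree of Lemma~\ref{lem:sbt} built over the $\log(m_i/w)$-bit lower halves of its elements. This tree has depth $O(\log v / \log\log m_i)$ — constant whenever $v$ is polylogarithmic in $m_i$ — and uses $O(v\log\log m_i)$ index bits on top of the $v\log(m_i/w)$ bits needed to store the elements themselves, while sharing the precomputed $O(m_i^{\gamma})$-bit table across all blocks. Summed over all blocks, the indices account for $O(n_i\log\log m_i)$ bits and the elements account for $n_i\log(m_i/w)\le B(n_i,m_i)+O(n_i)$ bits in the parameter regime $w\ge n_i$ in which the theorem is applied.

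For the fat superblocks I would then designate the \emph{head} of each block (its smallest element) and assemble the $O(n_i/v)$ heads, grouped by owning superblock, as the input to Lemma~\ref{lem:multirank2}; the lower bits of the heads, viewed as a flat set, form the output sequence $X_{i+1}$ of length $n_{i+1}=O(n_i/v)$ over universe $[m_{i+1}]=[m_i/w]$. Crucially $X_{i+1}$ is \emph{not} stored at this level: it is the alphabet that the next recursion level manages, and it simultaneously plays the role of the ranking oracle on $\Lambda$ that Lemma~\ref{lem:multirank2} demands. The multirank instance itself consumes $O((n_i/v)^{1+\delta})=O(n_i^{1+\delta})$ extra bits and answers in $O(\delta^{-1})$ time. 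Query execution then composes: $H$ locates $\hat s$ and the base rank; a slim superblock finishes in its block tree alone; a fat superblock uses the head-indexed block tree to pick the right block, forwards the residual lower-bits query to the recursive predecessor routine on $X_{i+1}$, and converts the returned answer back to an in-block rank via the multirank. Summing the space contributions of $H$, the block trees, and the multirank yields $B(n_i,m_i)+O(w+n_i\log\log m_i+n_i^{1+\delta})$ with added delay $O(\delta^{-1})$. The main obstacle in a full write-up is threading the three auxiliary structures together so that a predecessor in the head alphabet $X_{i+1}$ really refines to the correct predecessor inside $\hat s$ — which is precisely the scaled-alphabet property Lemma~\ref{lem:multirank2} is designed to provide — together with verifying that the edge cases ``empty block in $\hat s$'' and ``predecessor lies strictly below the first head'' are correctly handled by the block tree in concert with the multirank.
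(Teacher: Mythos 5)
Your proposal reconstructs the paper's construction essentially verbatim: Elias-Fano cut at $\log w$ high bits, slim/fat superblocks with threshold $v$, $v$-sized blocks each carrying a string B-tree from Lemma~\ref{lem:sbt}, block heads feeding a multirank instance from Lemma~\ref{lem:multirank2} whose $O(r^{1+\delta})$ matrix provides the oracle, and the heads' lower bits forming $X_{i+1}$ with the recursion replacing the $o(u)$ FID over $\Lambda$ — this last observation is exactly the point the paper leaves implicit, and you make it explicit, which is good. One small correction to the query narrative: the block string B-tree cannot be consulted before the multirank result is known — the actual order is $H$ locates the superblock, the recursive call on $X_{i+1}$ produces the alphabet rank, the multirank turns that into the correct block within the superblock, and only then does that block's string B-tree finish the predecessor search; your sentence inverts the first and last of these steps.
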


We must then deal with the last two steps. The first step aims at
supporting $\selectz{}$ since the above data structure can only support
$\ranko{}$. The second step deals with how treat the final sequence
after a number of iteration steps have been executed. We can finally give
the proof of our main result:

\begin{proof}[Proof of Theorem \ref{the:multirank_recursion}]
    Let $X \subseteq [m]$ be the set whose characteristic vector is $S$.
    The data structure involves recursive instances of
    Theorem~\ref{thm:inductiveStep}, by starting with $X_0 = X$ and
    using each step's output as input for the next step. As previously
    mentioned, we must only cover the base case and the last recursive
    step.  We begin by describing the whole data structure, moving to
    algorithms later on. We start by partitioning $X$ into $X'$ and $X''$
    as described in the proof of
    Theorem~\ref{the:rank-select-predecessor}, so that the
    construction is operated on both $X'$ and $X''$.  We now describe
    representation of $X'$; $X''$ is stored in a similar way.  
    We recursively build smaller sequences by invoking
    Theorem~\ref{thm:inductiveStep} exactly $s$ times, using $\delta$
    as given, and parameters $w = n$, $v = \log^2 m$.  By invoking
    Corollary~\ref{corollary:VEB-PT} the space bound easily follows.
    To support $\selectz{}$ on the original sequence, we operate on
    the $X'$ sequence alone, since when transformed to its dual $Y'$, we
    obtain a strictly monotone sequence.
    Interpreting $X'$ as 
    an implicit representation of $Y'$ we
    build a multiset representation for the high bits ($H'$), a new
    set of succinct string B-trees using the superblocks of the dual sequence
    and thought of as operating on $Y'$ (similarly to Theorem~\ref{the:rank-select-predecessor}) and a 
    new set of $s$ recursive applications of Theorem~\ref{thm:inductiveStep}.

    $\selecto{}$ is trivial, thanks to the machinery of 
    Theorem~\ref{the:rank-select-predecessor}.  The $\ranko$
    algorithm for a query $q$ is performed on both $X'$ and $X''$
    \FID: we start by querying $H_0$, the upper bits of $F'_0$
    ($F''_0$ respectively) for $q / 2^{\lceil\log n\rceil}$, thus
    identifying a certain superblock in which the predecessor for
    $q$ can appear. Unless the superblock is slim (refer to proof of
    Theorem~\ref{thm:inductiveStep}) we must continue to search through
	the next lower-order bits.  
	This is done via multiranking, which recurses in a cascading manner
    with the same technique on the $s$ steps up to the last \FID, that
    returns the answer.  The chain is then walked backwards to find the root \FID\
    representative.  We finally proceed through the succinct string
    B-tree to find the head and the next succinct string B-tree until
    we find the predecessor of $q$.  The last step for recursion
    takes $O(\eps^{-1})$ time. All the middle steps for multiranking
    and succinct string B-tree traversals take $O(s\delta^{-1} + s)$
    time. 
    To support $\selectz{}$, we act on $X'$, using exactly the same algorithm
    as before using, but with the collection of data structures built for the dual
    representation $Y'$, and following the steps of
    Theorem~\ref{the:rank-select-predecessor}. 

During the buildup of the recursive process, say being at step $i$,
the size $n'_i$ for sequence $X'_i$ ($i > 1$),  is
upper bounded by $n / \log^{2i} m$, while the universe has size $m /
n^i$. If at any step $2 \leq j \leq s$ the condition $m_j < w = n$ does
not apply, we cannot apply Theorem~\ref{thm:inductiveStep},
so we truncate recursion and use a $o(w)$ \FID\ to store
the sequence $X_j$. This contributes a negligible amount to the redundancy.
We name the \FID\ for each step $F_1$ up to $F_s$. Suppose we 
can recurse for $s$ steps with Theorem~\ref{thm:inductiveStep}, we end
up with a sequence over a universe $m_s = m/n^s$.  
By using Corollary~\ref{corollary:VEB-PT} the space bound is no 
less than $O(n (m/n^s)^\eps)$. 
The $B(n_i,m_i) + O(n_i^{1+\delta})$ factors decrease geometrically,
so the root dominates and we can show that, apart from lower order terms,
the space bound is as claimed.
Otherwise, the total space
$s(n_i, m_i)$ of the recursive data structure satisfies:
\[
s(n_i, m_i) = s(n_{i+1}, m_{i+1}) + \mathrm{space(\FID\ for\ high\
 bits)} + \mathrm{space(string\ B\mbox{-}trees)} + O(n_i^{1+\delta})
\]
where $n_{i+1} = n_i /\log^2 m$ and $m_{i+1} = m_i /n$. 
The claimed redundancy follows easily.
\end{proof}




%
%
\smallskip
\noindent {\bf Acknowledgements.} The first two authors would like to thank Sebastiano Vigna for precious
discussion. Thanks also go to the anonymous referees for useful comments. Work partially supported by the MAINSTREAM Italian MIUR Project.
\bibliography{refs}
\bibliographystyle{plain}
\vspace*{-1.2cm}
\end{document}